\documentclass[12pt]{article}
\def\e{{\rm e}}
\def\a{\alpha}

\usepackage[english]{babel}
\usepackage{amssymb}
\usepackage{amsmath}
\usepackage{indentfirst}
\usepackage{graphicx}
\usepackage{graphics}
\usepackage{epsfig}
\usepackage{subfigure}
\usepackage{multirow}
\usepackage{color}
\usepackage{subfigure,a4}
\usepackage{multirow}
\usepackage{epsfig,ulem}
\usepackage{dcolumn}
\usepackage{bm}

\newcommand{\be}{\begin{eqnarray}}
\newcommand{\ee}{\end{eqnarray}}

\usepackage[english]{babel}
\usepackage{amssymb}
\usepackage{amsmath}
\usepackage{amsfonts}
\usepackage{amsbsy}
\usepackage{indentfirst}
\usepackage{graphicx}
\usepackage{color}

\newcommand{\C}{\mathbb{C}}
\newcommand{\R}{\mathbb{R}}

\def\e{{\rm e}}



\newtheorem{theorem}{Theorem}[section]
\newtheorem{pro}{Proposition}[section]
\newtheorem{rmk}{Remark}[section]


\newenvironment{proof}[1][Proof]{\noindent\textbf{#1.} }{\ \rule{0.5em}{0.5em}}
\def\d{{\rm d}}

\def\a{\alpha}
\def\diag{{\rm diag}}
\def\<{\langle}
\def\>{\rangle}

\def\diag{{\rm diag}}

\newcommand{\beq}{\begin{equation}}
\newcommand{\eeq}{\end{equation}}

\newcommand{\bmat}{\begin{displaymath}}
\newcommand{\emat}{\end{displaymath}}

\def\1{{\bf 1}}

\def\ga{\gamma}
\begin{document}
\title{
Non-Hermitian
quantum mechanics of bosonic operators}
\author{
N. Bebiano\footnote{ CMUC, University of Coimbra, Department of
Mathematics, P 3001-454 Coimbra, Portugal (bebiano@mat.uc.pt)},
J.~da Provid\^encia\footnote{CFisUC, Department of Physics,
University of Coimbra, P 3004-516 Coimbra, Portugal
(providencia@teor.fis.uc.pt)}~
and J.P. da
Provid\^encia\footnote{Department of Physics, Univ. of Beira
Interior, P-6201-001 Covilh\~a, Portugal
(joaodaprovidencia@daad-alumni.de)}
} \maketitle

\begin{abstract}
The spectral analysis of a non-Hermitian unbounded
operator appearing in quantum physics is our main concern.
The properties of such an operator are essentially different
from those of Hermitian Hamiltonians, namely due to spectral
instabilities.
We demonstrate that the considered operator
and its adjoint can be
diagonalized when expressed in terms of certain
conveniently constructed operators. We show that
their eigenfunctions constitute complete systems,
but do not form Riesz bases. Attempts to overcome this difficulty
in the quantum mechanical set up are pointed out.
\end{abstract}
\section{Introduction}
The main motivation for this article is the following.
In conventional formulations of
non-relativistic quantum mechanics,
the Hamiltonian operator is Hermitian (synonymously, self-adjoint), and so
has real eigenvalues and an orthonormal set of eigenfunctions.
These fundamental issues are in the
heart of von Neumann quantum paradigm for physical observability and dynamical
evolution. Certain relativistic extensions of quantum
mechanics lead to non-Hermitian Hamiltonian operators,   $H\neq
H^*$, for $H^*$ the adjoint of $H$ (e.g. see \cite[Chapter VIII]{davydov}).
Extending the set of allowed operators by including $\cal PT$-symmetric ones,
$\cal P$ being the {\it reflection} operator, ${\cal P}f(x)=f(-x)$ and $\cal T$
the {\it time reflection} operator, ${\cal T}f(x)=\overline{f(x)}$,
yields spectra with reflection symmetry with respect to the real axis.
\color{black} In this context, non-Hermitian Hamiltonians
with a purely discrete spectrum have been the object of intense
research activity
\cite{bagarello,*,[1],[2],mostafa,providencia,scholtz,[3]},
attempting to build quantum mechanical theories with physical observables
described by these operators.
The subtleties of non self-adjoint Hamiltonians deserved the attention of physicists
and mathematitians, as they may originate new and unexpected phenomena, and the
non self-adjoint theory has revealed difficulties and challenging problems.

We focus on the spectral analysis of non-Hermitian operators,
a field with applications in many areas of physics, most remarkably in
quantum mechanics.
We summarise theoretical classical and recent developments on this topic, and
the theory is illustrated with a concrete example.
We explicitly determine the eigenfunctions and
eigenvalues of a non-Hermitian
operator describing two interacting bosons, concretly, a non self-adjoint 2D
harmonic oscillator.

Let $\cal H$ be an infinite dimensional separable Hilbert space
endowed with the inner product $\langle\cdot,\cdot\rangle$, and corresponding norm $\|\cdot\|$.
For $H$ a non-Hermitian operator in $\cal H$ with a purely discrete
spectrum, the strategy of finding a Hermitian operator $H_0$ and an
invertible operator $P$ such that
\begin{equation}\label{**}H=PH_0P^{-1},\end{equation}
has been exploited.
This idea refers that such a non-Hermitian $H$
can be viewed essentially as an alternative
representation of a Hermitian operator with the same eigenvalues and
multiplicities, whenever $P,~P^{-1}$ are both bounded (see Ref. \cite{krejcirik}).

The relation (\ref{**}) is closely related to the {\it
quasi-Hermicity} of $H$
\begin{equation}\label{***}QH=H^*Q,
\end{equation}
where $Q=(P^{-1})^*P^{-1}$ is a positive definite operator called a {\it
metric operator}. (The relation (\ref{***}) means that $QH$ and $H^*Q$
have the same domain and act in the same manner). In  fact, $H$ with property (\ref{**}), is
formally Hermitian with respect to the modified inner product
$\langle Q\,\cdot,\cdot\rangle,$
$$\langle QHx,y\rangle=\langle x,H^*Qy\rangle=\langle x,QHy\rangle, ~~x,y\in\cal H .$$
The metric is said to be {\it non singular} if it is bounded,
invertible and boundedly invertible, otherwise it is {\it
singular}.  
The pathologies of
non-Hermitian operators with singular metric may be serious,
since the metric can transform a basis into a set without any
reasonable basicity properties. Moreover, the spectral stability
with respect to small perturbations is not ensured and complex eigenvalues
can appear distant to the original ones \cite{siegle}.

The rest of this note is organized as follows. In Section \ref{S2}
we recall some issues used subsequently. In Section \ref{S3}
we investigate spectral properties of an unbounded
non-Hermitian operator acting in $\cal H$, quadratic in a pair of
bosonic operators, and of their adjoints. It is shown that the
obtained eigenfunctions are complete systems but do  not form Riesz
bases for $\cal H$, and so the validity of useful properties of
Hermitian operators is not   guaranteed.
In fact, non-Hermitian operators behave in an  essentially
different way, due in particular to spectral instabilities and to the lack of
basicity  of the Hamiltonian eigenfunctions.
In Section \ref{S5} we
discuss incidences of non-Hermiticity in quantum physics.

\section{Preliminaries}\label{S2}
In the sequel, we shall be concerned with ${\cal
H}=L^2(\R^2)$,  the Hilbert space of square-integrable complex
valued functions in two real variables $x,y$, equipped with the
inner product
\begin{equation}\langle\Phi,\Psi\rangle=\int_{-\infty}^{+\infty}\int_{-\infty}^{+\infty}{\Phi(x,y)}\overline{\Psi(x,y)}\d
x\d y,\label{innerproduct}\end{equation}
and corresponding norm $\|\cdot\|.$
\color{black}
Many important operators in quantum physics are unbounded, which restricts
their domains of definition to adequate subsets of the Hilbert space
where they live. For instance,
these domains are considered to be dense.

The set of eigenfunctions of $H$, $\{\Psi_n\}_{n=1}^\infty$, forms a {\it complete} family in $\cal H$
if the span of $\Psi_n$ is dense in $L^2(\R^2)$, or, equivalently, the
orthogonal complement of this linear span in the Hilbert space reduces to the zero function.

Eigenfunctions of non-Hermitian operators are in general not orthogonal or even
do not form a complete family. It should be stressed that the completeness of an orthogonal
set  $\{\Psi_n\}_{n=1}^\infty$, adequately normalized, does not imply,
by its own, basicity. An orthonormal set $\{\Psi_n\}_{n=1}^\infty$
does not guarantee that any $\psi\in\cal H$
admits a unique expansion of the form
$$\psi=\sum_{n=1}^\infty c_n\Psi_n,$$
where, in this event, it should be $c_n=\langle\psi,\Psi_n \rangle,~n\geq1.$

The concept of eigenbasis is very important in quantum mechanics. In contrast to
the case of eigenfunctions of Hermitian operators, which are basis,
for non-Hermitian operators this property fails
in general, and so the notion of {\it Riesz basis} may be of interest.
We say that $\{\Psi_n\}_{n=1}^\infty$ is a Riesz basis if
for any $\psi\in \cal H$, there exists a positive constant $C$ independent of $\psi$ such that
$$C^{-1}\|\psi\|^2\leq\sum_{n=1}^\infty|\langle\psi,\Psi_n\rangle|^2\leq C\|\psi\|^2.$$

The eigenfunctions of an operator $H$ with purely discrete spectrum constitute a
Riesz basis if and only if $H$ is quasi-Hermitian via a nonsingular
bounded metric $Q$ (see \cite{bagarello*}). The eigenfunctions of a non-Hermitian
$H$, despite possibly being
a complete family, may not form a Riesz basis, as occurs frequently
with several models \cite{krejcirik}. It should be noticed that
Riesz basicity, like the spectrum, is not preserved by unbounded metrics.

We introduce, as a first example of familiar operators
in quantum mechanics, the {\it multiplication} operators $x$ and
$y$
$$f(x,y)\rightarrow{x f(x,y)},\quad f(x,y)\rightarrow{y f(x,y)},$$
defined in their maximal domains
${\cal D}(x)=\{\psi\in L^2(\R^2):x\psi\in L^2(\R^2)\}$ and
${\cal D}(y)=\{\psi\in L^2(\R^2):y\psi\in L^2(\R^2)\}$.

We consider the partial differential operators ${\partial/\partial
x}$ {and} ${\partial/\partial
y}$ defined by
$$f(x,y)\rightarrow{\partial f(x,y)\over\partial x},\quad
f(x,y)\rightarrow{\partial f(x,y)\over\partial y},$$
with domains
${\cal D}(\partial/\partial x)=W^{1,2}(\R^2)$ and
${\cal D}(\partial/\partial y)=W^{1,2}(\R^2),$
where $W^{1,2}(\R^2)$ stands for
the usual Lebesgue space of functions in $L^2(\R^2)$, whose first
partial derivatives belong to $L^2(\R^2)$.
These operators are known as {\it momentum} operators.
The partial differential operators ${\partial^2/\partial
x^2}$ {and} ${\partial^2/\partial
y^2}$ are defined by
$$f(x,y)\rightarrow{\partial^2 f(x,y)\over\partial x^2},\quad
f(x,y)\rightarrow{\partial^2 f(x,y)\over\partial y^2},$$
with domains
${\cal D}(\partial^2/\partial x^2)=W^{2,2}(\R^2)$ and
${\cal D}(\partial^2/\partial y)=W^{2,2}(\R^2),$ the Lebesgue space
of functions in $L^2(\R^2)$ whose first and second derivatives belong
to $L^2(\R^2)$.

The {\it (standard) bosonic operators}
$$a=:x+{1\over2}{\partial\over\partial x},\quad
b=:y+{1\over2}{\partial\over\partial y},$$
with domains
$${\cal D}(a)={\cal D}(b)=\{\Psi\in W^{2,2}(\R^2):x\Psi, y\Psi\in L^2(\R^2)\}$$
are useful in our discussion.
The operators $a$ and $b$ are known to be densely defined, closed, and their adjoints read
$$a^*=x-{1\over2}{\partial\over\partial x}, \quad b^*=y-{1\over2}{\partial\over\partial y},$$
We recall that, conventionally, $a,b$ are said to be {\it
annihilation operators}, while $a^*,b^*$ are  {\it creation
operators}. It is worth noticing that these operators are unbounded,
and they satisfy the commutation rules (CR's),
\begin{equation}\label{CR1}[a,a^*]=[b,b^*]=\1,\end{equation}where $\1$ is the identity operator on
$L^2(\R^2).$ (This means that $aa^*f-a^*af=bb^*f-b^*bf=f$ for any $f$
in ${\cal D}(a)$ and ${\cal D}(b)$) Furthermore,
\begin{equation}\label{CR2}[a,b^*]=[b,a^*]=[a^*,b^*]=[a,b]=0.\end{equation}
As it is well-known, the canonical commutation relations (\ref{CR1})
and (\ref{CR2}) characterize an algebra of Weil-Heisenberg (W-H).
Moreover, the following holds,
$$a\Phi_0=b\Phi_0=0,$$
for $\Phi_0=\e^{-(x^2+y^2)}
$ in ${\cal D}(a)$ and ${\cal D}(b)$, a so-called {\it vacuum
state}.
The set of functions
\begin{equation}\label{Phimn}\{\Phi_{m,n}=a^{*m}b^{*n}\Phi_0:~m,n\geq0\},\end{equation}constitutes a
basis of $\cal H$, that is, every vector in $L^2(\R^2)$ can be
uniquely expressed in terms of this system, which is {\it
complete}, since 0 is the only vector orthogonal to all its
elements.

\section{Non self-adjoint Hamiltonian $H$
describing two interacting bosonic operators}
\label{S3}
\subsection{Model}
We consider the operator in $L^2(\R^2)$,
\begin{equation}\label{Ham}H:=-{1\over4}\left({\partial^2\over\partial x^2}+{\partial^2\over\partial y^2}\right)
+\gamma\left(y{\partial\over\partial x}+x{\partial\over\partial y}\right)
+(x^2+y^2),\quad \gamma\in\R,\end{equation}
which is obviously non self-adjoint for $\gamma\neq0.$
The closedness of the operator is the essential starting point
for the study of its spectrum. In the perspective of a convenient domain of $H$,
aiming at defining ${\cal D}(H),$ we regard the cofactor of $\gamma$ in the non-self-adjoint term,
$$V=\left(y{\partial\over\partial x}+x{\partial\over\partial y}\right),
$$ as a perturbation of the 2D harmonic oscillator,
$$\Re(H)=-{1\over4}\left({\partial^2\over\partial x^2}+{\partial^2\over\partial y^2}\right)
+(x^2+y^2).$$ It is easily seen that
this term is relatively bounded with the relative bound less
than one with respect to $\Re(H)$, provided that $\gamma$ is less
than one. Under this condition, $H$ is
a closed operator on the domain of the harmonic oscillator:
$$W^{2,2}(\R^2) \cap L^2(\R^2,(x^4+y^4) \d x \d y).$$ Here, we have used
the standard perturbation result that states the following.
If $\Re(H)$ is closed and $V$ is relatively bounded with respect to
 $\Re(H)$, with the relative bound smaller than 1, then  $\Re(H)+\lambda V$, with $\lambda<1$,
 is closed (\cite[Theorem 3.3]{Krejcirik*}).

We notice that ${\cal D}(H)$ contains the subspace $\mathfrak{S}(\R^2)$
constituted by functions
$f(x,y)$ such that $e^{-2\gamma xy}f(x,y)\in \mathfrak{S}(\R^2)$
and, in turn, this one
contains $C_0^\infty(\R^2)$.


In terms of creation and annihilation operators, $H$ is equivalently defined as
\begin{equation}\label{H}H=a^*a+bb^*+\gamma(a^*b^*-ab),\quad \gamma\in\R,\end{equation}
and so $H$ is quadratic in the bosonic operators $a,b$ and in
their adjoints.
\subsection{Spectrum}
As it is well known,
the {\it resolvent set} of $H$, denoted by $\rho(H)$, is constituted
by all the complex numbers for which the {\it resolvent operator}
$\lambda\in\rho(H)\rightarrow(H-\lambda)^{-1}$
exists as a bounded operator on $\cal H$. The complement
$$\sigma(H)=\C\backslash\rho(H)$$
is called the {\it spectrum} of $H$.
The set of all eigenvalues of $H$ is called the
{\it point spectrum}, 
denoted by $\sigma_p(H),$ and
formed by complex numbers $\lambda$ for which
$H-\lambda:{\cal D}(H)\rightarrow {\cal H}$ is not injective.
The spectrum of an operator on a finite dimensional
Hilbert space is exhausted by the eigenvalues, but, in the infinite dimensional setting,
there are additional parts to be considered.
 Those $\lambda$ which are not
eigenvalues but $H-\lambda$ is not
bijective constitute the {\it continuous} or {\it residual
spectrum}, depending on the range ${\rm Ran}(H-\lambda)$ being, respectively, dense
or not. The spectrum  $\sigma(H)$ is the union of
these three disjoint spectra.
The  spectrum of self-adjoint operators is nonempty, real,
and the residual spectrum is empty, while the spectrum of non
self-adjoint operators can be empty or coincide with the whole
complex plane (see, e.g., refs. \cite{siegle,[3]}).

\section{Accretivity}
It may be advantageous to estimate the spectrum in terms of the {\it numerical range}:
$$W(H):=\{\langle H\psi,\psi\rangle:\psi\in{\cal D}(H),\|\psi\|=1\}.$$
In general $W(H)$ is neither open nor closed, even when $H$ is a closed
 operator. However, it is always convex and  for $H$ bounded,
 $$\sigma(H)\subset\overline{W(H)}.$$
\begin{pro}\label{P3.1}
The numerical range of $H$ is bounded by the hyperbola
$$y^2+\gamma^2 ({1-x^2})=0,\quad x\geq1.$$
\end{pro}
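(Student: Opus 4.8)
The plan is to prove the inclusion
$$W(H)\ \subseteq\ \bigl\{(x,y)\in\R^2:\ x\ge1,\ y^2+\gamma^2(1-x^2)\le0\bigr\},$$
that is, that $W(H)$ is contained in the region enclosed by the stated right branch of the hyperbola (which is what one needs in order to localise $\sigma(H)$); at the end I indicate how the reverse inclusion is obtained, so that $\overline{W(H)}$ actually equals that region. I work with the representation (\ref{H}). To make the unbounded--operator manipulations below legitimate I first take $\psi$ in the dense subspace ${\rm span}\{\Phi_{m,n}\}$, which is a core for $H$ (it is a core for $\Re(H)$, hence for $H$ by the relative boundedness recalled above); since the values of $\langle H\psi,\psi\rangle$ on a core are dense in $W(H)$, the resulting \emph{closed} inclusion extends to all of ${\cal D}(H)$. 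Write $\langle H\psi,\psi\rangle=u+iv$ with $\|\psi\|=1$. Because $\Re(H)=\tfrac12(H+H^*)=a^*a+bb^*=:K$ is self-adjoint while $H-K=\gamma(a^*b^*-ab)$ is skew-adjoint, we have $u=\langle K\psi,\psi\rangle\in\R$ and $v=\gamma\langle L\psi,\psi\rangle$ with $L:=-i(a^*b^*-ab)=L^{*}$. For $(x,y)=(u,v)$ the inequality $y^2+\gamma^2(1-x^2)\le0$ is the same as $u^2-\gamma^{-2}v^2\ge1$ (the case $\gamma=0$ being trivial, as then $H=K\ge\1$), so it suffices to prove
$$u\ge1\qquad\text{and}\qquad u^2-\gamma^{-2}v^2\ge1 .$$
The first is immediate from (\ref{CR1}): $u=\|a\psi\|^2+\|b^{*}\psi\|^2=\|a\psi\|^2+\|b\psi\|^2+1\ge1$, with equality only at $\psi=\Phi_0$, the vertex of the hyperbola.

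The second inequality is the core of the argument, and the key step is to introduce the auxiliary operators
$$g:=a+ib^{*},\qquad h:=a-ib^{*}.$$
Using (\ref{CR1})--(\ref{CR2}) one checks the identities
$$g^{*}g=K-L,\qquad h^{*}h=K+L,\qquad [g,g^{*}]=[h,h^{*}]=[g,h]=0,\qquad [g,h^{*}]=2\,\1 .$$
From the first two, $\|g\psi\|^2=u-\gamma^{-1}v$ and $\|h\psi\|^2=u+\gamma^{-1}v$, hence
$$u^2-\gamma^{-2}v^2=\|g\psi\|^2\,\|h\psi\|^2 ,$$
so the claim reduces to the single bound $\|g\psi\|\,\|h\psi\|\ge1$. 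I obtain this from an uncertainty--type estimate: $[g,g^{*}]=0$ gives $\|g^{*}\psi\|=\|g\psi\|$ and $[h,h^{*}]=0$ gives $\|h^{*}\psi\|=\|h\psi\|$, whence, after moving one factor across the inner product and applying the Cauchy--Schwarz inequality,
$$2=\bigl|\langle[g,h^{*}]\psi,\psi\rangle\bigr|=\bigl|\langle h^{*}\psi,g^{*}\psi\rangle-\langle g\psi,h\psi\rangle\bigr|\le\|h^{*}\psi\|\,\|g^{*}\psi\|+\|g\psi\|\,\|h\psi\|=2\,\|g\psi\|\,\|h\psi\| .$$
Together with $u\ge1$ this places $u+iv$ in the asserted region.

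I expect the main obstacle to be precisely the discovery of the right auxiliary pair $g,h$: one needs $g^{*}g$ and $h^{*}h$ to be affine functions of $K$ and $L$ and, at the same time, the commutators of $g,g^{*},h,h^{*}$ to collapse to scalars in exactly the way that delivers the sharp constant $1$; once this algebraic picture is in place, the rest is two lines of Cauchy--Schwarz. (In the complex coordinate $z=x+iy$ one simply has $g=z+\partial_z$, $h=\bar z+\partial_{\bar z}$, which makes the above identities transparent.) Finally, to upgrade the inclusion to an equality for $\overline{W(H)}$ I would argue through support functions: for each supporting direction $\theta$ the self-adjoint section $\cos\theta\,K+\gamma\sin\theta\,L=\tfrac{\cos\theta-\gamma\sin\theta}{2}\,g^{*}g+\tfrac{\cos\theta+\gamma\sin\theta}{2}\,h^{*}h$ has, by the same estimate $\|g\psi\|\,\|h\psi\|\ge1$ (which is sharp, being attained on an explicit Gaussian ground state), spectral bottom whose rescaling reproduces the support function $-\sqrt{\cos^{2}\theta-\gamma^{2}\sin^{2}\theta}$ of the claimed region; by convexity of $W(H)$ this forces equality. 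I would present only the outline of that last step.
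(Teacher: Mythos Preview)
Your argument is correct and takes a genuinely different route from the paper. The paper proceeds by the classical supporting-line method: for each direction $\theta$ it computes the bottom of the spectrum of the self-adjoint section $\Re(\e^{-i\theta}H)$ (using the equations-of-motion technique to diagonalise the rotated quadratic form), obtains the family of supporting lines $x\cos\theta+y\sin\theta=E_\theta$ with $E_\theta=\sqrt{\cos^2\theta-\gamma^2\sin^2\theta}$, and then eliminates $\theta$ to find the envelope $y=\pm\gamma\sqrt{x^2-1}$. Your proof instead produces the sharp pointwise inequality $u^2-\gamma^{-2}v^2\ge1$ directly, by introducing the normal pair $g=a+ib^{*}$, $h=a-ib^{*}$ with $[g,h^{*}]=2\,\1$, rewriting $u^2-\gamma^{-2}v^2=\|g\psi\|^2\|h\psi\|^2$, and closing with a two-line uncertainty estimate. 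This is more elementary and self-contained: no diagonalisation, no parametric family, and the hyperbolic bound drops out of a single Cauchy--Schwarz. The paper's method, on the other hand, yields the boundary and its attainment (via ground states of $\Re(\e^{-i\theta}H)$) in one stroke, whereas you still need the support-function sketch for the reverse inclusion; note that your section operator $\frac{\cos\theta-\gamma\sin\theta}{2}g^{*}g+\frac{\cos\theta+\gamma\sin\theta}{2}h^{*}h$ has spectral bottom $+\sqrt{\cos^2\theta-\gamma^2\sin^2\theta}=E_\theta$ (positive sign), matching the paper's computation.
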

\begin{proof}
The numerical range of $H$ is determined as follows.
Recalling that $W(H)$ is convex, let us consider the supporting line of $W(H)$
perpendicular to the direction $\theta$. Recall that the distance
of this line to the origin is the lowest eigenvalue of
$$\Re (\e^{-i\theta}H)=(a^*a+b^*b)\cos\theta-i\gamma(a^*b^*-ab)\sin\theta,$$
provided this operator is bounded from below, which occurs for $-\pi/2\leq\theta<\pi/2$.
The eigenvalues of  $\Re(\e^{-i\theta}H)$  are readily determined by
the EMM \cite{davydov}, and they are readily found to be
$$E_\theta(1+n+m),\quad n,m\geq 0,$$
where
$$E_\theta={\sqrt{1 - \gamma^2 + \cos(2 \theta) + \gamma^2 \cos(2 \theta)}\over\sqrt2}.$$
Thus, the supporting line under consideration is given by
\begin{equation}x\cos\theta+y\sin\theta=E_\theta.\label{support}\end{equation}
As it is well known, the equation of the boundary of $W(H)$ is found eliminating $\theta$
between (\ref{support}) and
$$-x\sin\theta+y\cos\theta=E'_\theta,
$$
being readily obtained as the branch of  hyperbola
$$y=\pm\gamma\sqrt{x^2-1}, ~ x\geq1.$$
\end{proof}\color{black}

 {\it Sectorial} operators are defined by the property
that their numerical range is the subset of  a sector
$$S_{w,\theta}=\{z\in\C:|\arg{(z-w)}|\leq\theta\}$$
with $w\in\R$ and $0\leq\theta<\pi/2,$ called, respectively, the {\it vertex} and {\it semi-angle} of
$H$. An operator is said to be {\it accretive} if the  vertex can be chosen at the origin, i.e.,
$W(H)\subset S_{0,\pi/2}$. An operator $H$ is $m$-{\it accretive} if
its numerical range is contained in the right closed half-plane
and the {\it resolvent bound} for any $\lambda$ with $\Re\lambda<0$, holds:
$$\forall\lambda\in\C,~\Re\lambda<0,~\|(H-\lambda)^{-1}\|\leq1/|\Re \lambda|.$$
Any $m$-accretive operator is closed and densely defined (\cite{bagarello*},p.251)
\begin{pro}
The operator $H$ is $m$-accretive.
\end{pro}
\begin{proof}
We show that for any $z\in\C,$ with $\Re z<0$, the resolvent bound holds.
We have
$${\rm dist}(z,\overline{W(H)})\leq|\langle H\psi,\psi\rangle-z|=
|\langle(H-z)\psi,\psi\rangle|\leq\|(H-z)\psi\|.$$
As ${\rm dist}(z,\overline{W(H)})\geq|\Re z|,$ the result follows,
having in mind Proposition \ref{P3.1}.
\end{proof}

Any $m$-accretive operator is closed and densely defined.
In fact, by Proposition 5.2.1 in \cite{bagarello*} p. 246 if $H$ is not closed, then $\sigma(H)=\C$.

A closed operator $H$ in $\cal H$  has a {\it compact resolvent} if
$\rho(H)\neq{\O}$ and $(H-\lambda)^{-1}$, for some $\lambda\in\rho(H),$
is a compact operator.

It is known that \cite[Theorem IX, 2.3]{Krejcirik*},
if $H$ has a compact resolvent, then $\sigma(H)=\sigma_p(H)$.
The operator $H$ has a compact resolvent, as $\Re(H)$ is an $m$-accretive
operator (since $\Re(H)$ is Hermitian and $W(\Re(H))$ lies on the positive real axis)
with compact resolvent and, moreover, $V$ is relatively bounded with respect to $\Re(H)$
with relative bound smaller than 1. Then, $\Re(H)+\lambda V$ has a
compact resolvent \cite[Theorem 5.4.1]{Krejcirik*}.
\subsection{Eigenvectors and eigenvalues of $H$}\label{ss3.1}
To obtain the eigenvalues of $H$, we firstly consider the
selfadjoint operator in $L^2(\R^2),$
\begin{equation}\label{H0}H_0=-{1\over4}\left({\partial^2\over\partial x^2}+{\partial^2\over\partial y^2}\right)
+(1+\gamma^2)(x^2+y^2),~~\gamma<1.\end{equation}
We know {\it a priori} that the eigenfunctions of $H_0$ (after normalization)
form an orthonormal family in $L^2(\R^2)$  and the corresponding eigenvalues are real.
The eigenvalues of $H_0$ are easily determined as  $$E_{mn}=(1+m+n)\sqrt{1+\gamma^2},~~m,n\geq0,$$
and the associated eigenfunctions $\Phi_{mn}$  are 
\begin{equation}\Phi_{m,n}=KH_m((1+\gamma^2)^{1/4}x)\e^{-\sqrt{1+\gamma^2}x^2}
\times H_n((1+\gamma^2)^{1/4}y)\e^{-\sqrt{1+\gamma^2}y^2},\label{eigenf}\end{equation}
where $K=(2(1+\gamma^2)/\pi)^{1/2}$ and $H_n(t)$ is the $n$th Hermite polynomial in $t$.
That is, $\Phi_{mn}$ is factorized as follows
$$\Phi_{mn}(x,y)=\Phi_m(x)\Phi_n(y),$$ where $\Phi_n(t)$ are the
usual eigenfunctions of the famous harmonic oscillator \cite{bagarello*}.
For the sake of completeness, we show how to obtain $\Phi_{mn}$.
Indeed, let us consider the differential operators
$$g={1\over2(1+\gamma^2)^{1/4}}~{\partial \over\partial x}+(1+\gamma^2)^{1/4}x,\quad
h={1\over2(1+\gamma^2)^{1/4}}~{\partial \over\partial y}+(1+\gamma^2)^{1/4}y,$$
$$g^*=-{1\over2(1+\gamma^2)^{1/4}}~{\partial \over\partial x}+(1+\gamma^2)^{1/4}x,\quad
h^*=-{1\over2(1+\gamma^2)^{1/4}}~{\partial \over\partial y}+(1+\gamma^2)^{1/4}y,$$
which
satisfy the Weil-Heisenberg
commutation rules,
$$[g,g^*]=[h,h^*]=\1, [g,h]=[g^*,h^*]=[g^*,h]=[g,h^*]=0.$$
We easily find that
$$H_0=\sqrt{1+\gamma^2}(g^*g+h^*h+\1).$$
The {\it groundstate} eigenfunction of $H_0$, which is the vacuum of the operators $g,h$, is
$$\Phi_0(x,y)=\kappa_{0,0}\e^{-\sqrt{1+\gamma^2}(x^2+y^2)},$$
being the remaining eigenfunctions,
\begin{equation}\nonumber\Phi_{mn}(x,y)=\kappa_{mn}g^{*m}h^{*n}\Phi_0(x,y),\end{equation}
where $\kappa_{m,n}$ are normalization factors.

The  eigenfunctions $\Phi_{mn}(x,y)$ constitute a complete set in $L^2(\R^2)$, as
can be easily verified by adapting the
standard proof of completeness of Hermite functions.
The functions are orthogonal, and the
system forms a basis of $L^2(\R^2)$. Details can be found, for
instance, in \cite{davydov,dirac}.

To find the eigenvalues of $H$ we notice that $H$ is {\it formally} similar to $H_0$
\begin{equation}\label{sim}H_0=\e^{-2\gamma xy}H\e^{2\gamma xy}.\end{equation}
The word ``formally" refers to the fact that the operator $P=\e^{2\gamma xy}$ is unbounded.
Nevertheless, the similarity relation is well defined on the eigenfunctions of $H_0$.

For $\Phi(x,y)$ an arbitrary differentiable function in the domain of $H_0$,
we have
\begin{eqnarray*}&&\left(-{1\over4}\left({\partial^2\over\partial x^2}+{\partial^2\over\partial y^2}\right)
+\gamma\left(y{\partial\over\partial x}+x{\partial\over\partial y}\right)
+(x^2+y^2)\right)\e^{2\gamma xy}\Phi(x,y)\\&&=\e^{2\gamma xy}\left(-{1\over4}\left({\partial^2\over\partial x^2}+{\partial^2\over\partial y^2}\right)
+(1+\gamma^2)(x^2+y^2)\right)\Phi(x,y),
\end{eqnarray*}
that is
$$H\e^{2\gamma xy}\Phi=\e^{2\gamma xy}H_0\Phi,$$
and the operator equality
$$H\e^{2\gamma xy}=\e^{2\gamma xy}H_0$$
holds.

Since $$H_0\Phi_{mn}=E_{mn}\Phi_{mn},$$
we easily get
$$H\e^{2\gamma xy}\Phi_{mn}=E_{mn}\e^{2\gamma xy}\Phi_{mn}.$$
The  eigenfunctions of $H$ are expressed as
$$\Psi_{mn}=\e^{2\gamma xy}\Phi_{mn},$$
and the eigenvalues of $H$ are those of $H_0$.
Thus, the eigenvalues of $H$
are real positive and $\Psi_{mn}\in{\cal D}(H)$,
as it should be.



\subsection{Eigenfunctions and eigenvalues of $H^*$}\label{ss3.3}
Following the procedure in Subsection \ref{ss3.1},
it can be shown that, formally, we may write,
$$H_0=\e^{2\gamma xy}H^*\e^{-2\gamma xy},$$
implying that eigenfunctions and associated eigenvalues of
$H^*$ are given respectively by
$$\tilde\Psi_{m,n}=\e^{-2\gamma xy}\Phi_{m,n}\in {\cal D}(H^*),\quad E_{m,n}=
(m+n+1)\sqrt{1+\gamma^2},~~ m,n\geq0 .$$
\subsection{Biorthogonality of the eigenfunctions} It is
straightforward to see that the set  $\{\Psi_{m,n}: m,n\geq0\}$ is
not constituted by orthogonal functions. However, the vector system
$\{\tilde\Psi_{m,n}
: m,n\geq0\},$
formed by the eigenfunctions of $H^*$, is biorthogonal to the
eigensystem of $H$, $\{\Psi_{m,n}: m,n\geq0\}$. Indeed, we have
\begin{eqnarray*}&&\langle\Psi_{m,n},\tilde\Psi_{p,q}\rangle=\langle
\e^{2\gamma xy}\Phi_{mn},\e^{-2\gamma xy}\Phi_{pq}\rangle=\langle
\Phi_{mn},\Phi_{pq}
\rangle=m!n!\delta_{mp}\delta_{n,q}
,\end{eqnarray*}
where $\delta_{ij}=1$ for $i=j$, otherwise $\delta_{ij}=0$,
represents the Kronecker symbol.

\subsection{Completeness of the eigenfunctions}
The completeness of both eigensystems of $H$ and $H^*$ can be shown to hold. The
operator is $m$-accretive as its numerical range lies a hyperbolical region
limited by the branch of hyperbola  $y=\sqrt{x^2-1},~x\geq1$.
Moreover, it can be easily seen that the
imaginary part of the resolvent of $-i H$ at $\delta<0$, is
non-negative Hermitian
$${1\over2i}\left({( -iH-\delta})^{-1}-{(iH^*-\delta)^{-1}}\right)\geq0.$$
As the resolvent is a trace class function, by the completeness
theorem \cite[Theorem VII.8.1]{gohberg}, we may conclude that the
eigenfunctions of $H$ form a complete system. Analogous arguments
are valid  for the eigensystem of $H^*$.\color{black}

We observe that completeness does not imply that any $\psi\in
L^2(\R^2)$ has a unique expansion
$$\psi=\sum_{m,n=0}^\infty c_{mn}\Psi_{mn},$$
a fundamental issue in quantum mechanics.

\subsection{Asymptotic behavior of the eigenfunctions}
\def\S{{\rm S}}  We
wish to discuss the asymptotic behavior of the Hamiltonian
eigenfunctions. To this end, it is convenient to introduce the
Planck constant $\hslash,$ explicitly. We also change the notation
slightly. We replace the notation $x,y$, used for the particle
coordinates, by $x_1,x_2$, and we denote the respective momenta by
$$p_1=-i\hslash{\partial\over\partial x_1},\quad p_2=-i\hslash{\partial\over\partial x_2}
.$$For simplicity, we also consider $\gamma=1,$ so that (\ref{H})
becomes
$$H={1\over4}(p_1^2+p_2^2)+x_1^2+x_2^2+i(x_1p_2+x_2p_1)-1.$$
Next, we introduce the change of variables, which constitutes a
canonical transformation,
$$X={1\over2}(x_1+x_2),~~x=x_1-x_2,~~P=p_1+p_2,~~p={1\over2}(p_1-p_2).$$
In terms of the new variables,  the Hamiltonian
is the sum of two summands, each one involving only one type
of variable, and becomes
$$H={1\over8}P^2+2X^2+iXP+{1\over2}p^2+{1\over2}x^2+{i}xp-1.$$
We observe that it is equivalent to consider
$m,n\rightarrow\infty$ or $\hslash\rightarrow0.$
We firstly concentrate on the summand
$$H_X={1\over8}P^2+2X^2+iXP.$$ In order to
characterize the asymptotic behavior of the eigenfunctions we use
the well known Wentzel-Kramers-Brillouin [WKB] approximation
\cite[Chapter III]{davydov} in leading order. Indeed,
we express the
{eigenfunction  of the energy operator} as
$$\Psi(X)=\e^{i \S(X)/\hslash}.$$
The function $\S(X)$, in leading order, is determined by the Jacobi
equation $${1\over8}\left({\d \S\over\d
X}\right)^2+2X^2+iX{\d \S\over\d X}=E,$$ where $E$
denotes the associated eigenvalue of the energy operator. Thus, we
readily obtain,
$${\d \S\over\d X}=2\sqrt{2 E-8 X^2}-4iX,$$and
$$S=X\sqrt{2 E-8X^2}+ {E\over\sqrt2} \arctan{2X\over\sqrt{ E-4X^2}}-{2}iX^2.$$
Similarly, the eigenfunction of $H^*$ is
$$\tilde\Psi(X)=\e^{i \tilde \S(X)/\hslash},$$with
$$\tilde \S=X\sqrt{2 E-8X^2}+ {E\over\sqrt2} \arctan{2X\over\sqrt{ E-4X^2}}+{2}iX^2.$$
Therefore, the integral
$$\int_{-\sqrt{E/2}}^{\sqrt{E/2}}\overline\Psi\Psi~\d X$$
approaches $+\infty$ as $\hslash\rightarrow0$, while the integral
$$\int_{-\sqrt{E/2}}^{\sqrt{E/2}}\overline{\tilde\Psi}\tilde\Psi~\d X$$ approaches $0$ as
$\hslash\rightarrow0$.

On the other hand,  the integral
$$\int_{-\sqrt{E/2}}^{\sqrt{E/2}}\overline{\Psi}\tilde\Psi~ \d X$$ remains finite as
$\hslash\rightarrow0$.

The summand
$$H_x={1\over2}p^2+{1\over2}x^2+{i}xp$$
may be similarly treated.

As a consequence, we get
$$\lim_{m,n\rightarrow\infty}\|\Psi_{mn}\|=\infty.$$
Thus, the
system of eigenvectors of $\{\Psi_{mn}\}$ does not form a Riesz
basis for $L^2(\R^2)$. Analogous conclusion holds for
$\{\tilde\Psi_{mn}\}$.


\subsection{Metric operator}
The existence of a positive definite $Q$ satisfying (\ref{***}) is equivalent
to the fact that the resolvent of $H$ satisfies
$$Q{( H-z_0)^{-1}}={(H^*-z_0)^{-1}}Q$$
for $z_0\in\rho(H)\cap\rho(H^*)\cap\R^2.$

As a consequence of the reality of the discrete spectrum of $H$ and of the
completeness of the corresponding eigenfunctions, it can be shown,
using the procedure in \cite{siegle}, that there exists a bounded
metric for $H$. The proof relies on the fact that the existence of
bounded metric for an unbounded $H$ can be transferred to the
same problem for its bounded resolvent.

Nevertheless,
there
does not exist a non singular metric operator ensuring
quasi-Hermiticity, because the eigenfunctions of $H$ do not
form a Riesz basis.  As a consequence, in the similarity of $H$ to a
self-adjoint operator, the basicity properties of these operators may be very different
(see \cite{bagarello} and \cite{bagarello*}). However, despite these
negative features, in the subspace ${\cal S}={\rm
span}\{\Psi_{m,n}\},$ a new inner product may be meaningfully
defined with the help of a metric operator (non-singular in $\cal
S$), with respect to which the restriction of $H$ to $\cal S$, say $h$,
should represent $H$.


\subsection{Physical Hilbert space}\label{ss3.9}
In order to specify the {\it physical Hilbert space} associated with the Hamiltonian (\ref{Ham}),
we chose its domain to be the function space
$${\rm Dom}(H)=\{\Psi:\e^{-2\gamma xy}\Psi\in W^{2,2}(\R^2) \cap L^2(\R^2,(x^4+y^4) \d x \d y)\},$$
endowed with the inner product,
$$\ll\Xi_1,\Xi_2\gg=\int_{-\infty}^\infty\d x\int_{-\infty}^\infty\d y
\e^{-4\gamma xy}\Xi_1(x,y)\overline{\Xi_2(x,y)},\quad \Xi_1,\Xi_2\in{\rm Dom}(H),$$
which involves the weight function $\e^{-4\gamma xy}.$
Since $H_0$ in (\ref{H0}) is a closed operator on the domain of the harmonic oscillator
$${\rm Dom}(H_0)=W^{2,2}(\R^2) \cap L^2(\R^2,(x^4+y^4) \d x \d y),$$
and
$$H\e^{2\gamma xy}\Phi=\e^{2\gamma xy}H_0\Phi,\quad\forall \Phi\in{\rm Dom}(H_0),$$
while
$$\e^{-2\gamma xy}H\Psi=H_0\e^{-2\gamma xy}\Psi,\quad\forall\Psi\in{\rm Dom}(H),$$
it follows that $H$ is a closed operator on ${\rm Dom}(H).$

Let $${\cal S}={\rm span}\{\Psi_{mn}\},\quad\tilde{\cal S}={\rm span}\{\tilde\Psi_{mn}\},$$
and ${\cal D}={\rm span}\{\Phi_{mn}\}.$

In terms of the linear operators
$\e^{2 \gamma xy}:{\cal D}\rightarrow{\cal S}$ and
$\e^{-2\gamma xy}:{\cal D}\rightarrow\tilde{\cal S}$ we may write
\begin{equation}\label{*}\Psi_{m,n}=\e^{2\gamma xy}\Phi_{m,n},\quad
\tilde\Psi_{m,n}=\e^{-2\gamma xy}\Phi_{m,n}.\end{equation}\color{black}
From (\ref{*}) it follows that
$${\cal S}=\e^{S}({\cal D}),~~\tilde{\cal S}=\e^{-S}({\cal D}).$$
Since
$$\langle\e^{-S}\Psi_{mn},\e^{-S}\Psi_{pq}\rangle=\delta_{mp}\delta_{nq},~S=2\gamma xy,$$
it follows that the eigenfunctions $\Psi_{mn}$ are orthogonal for this inner product,
$$\ll\Psi_{mn},\Psi_{pq}\gg=\delta_{mp}\delta_{nq}.$$

Next we show that {the linear space $\cal S$},
equipped with the inner product $\ll\cdot,\cdot\gg$, allows
the probabilistic interpretation of quantum mechanics.
{The symmetry is obviously satisfied.}
It may be pointed out that, from the point of view of
physics, only the action of the metric operator $\exp(-2S)$ on $\cal
S$ is significant and, in this event, it is unumbiguously defined.
We verify that $\exp(-2S)$ is a metric.

{For $\phi,\psi\in\cal S$,}
$\langle\exp(-2S)\phi,\psi\rangle$ is finite, and so the operator
$\exp(-2S)$ is bounded in $\cal S$. (Similarly, $\exp(2S)$ is bounded
in $\tilde{\cal S}$.) Moreover, $$\ll\psi,\psi\gg,~\psi\in{\cal S},$$
is nonnegative. (We notice that  $H$ leaves $\cal S$ invariant.)

The vectors $\Psi_{mn}$ are orthogonal with respect to
the new inner product
and any $\Psi\in\cal S$ may be expanded as a unique finite linear
combination of $\Psi_{nm},$
$$\Psi=\sum_{nm} c_{nm}\Psi_{nm},$$ with
$$\ll\Psi,\Psi_{nm}\gg=c_{nm}.$$ In order to
ensure the probabilistic interpretation of quantum mechanics, we
impose the normalization $\ll\Psi,\Psi\gg=1$, so that
$$\ll\Psi,\Psi\gg=\sum_{nm}|c_{nm}|^2=1.$$
The probability amplitude, is given by
$c_{nm},$ and  satisfies $\sum_{nm}|c_{nm}|^2=1.$


\section{Conclusions}\label{S5}We have initially considered
a non self-adjoint Hamiltonian $H$ whose eigenvalues and
corresponding eigenfunctions have been explicitly determined
The investigated Hamiltonian and (its adjoint) has real
eigenvalues and systems of biorthogonal eigenvectors. They  have
infinite diagonal matrix representations in the respective
eigensystems, which are complete. Nevertheless, they do
not form Riesz bases.

Viewing $H$ as the Hamiltonian of a physical model, problems arise
from non Hermiticity.
The original inner product defined in $\cal H$ is not adequate for
the physical interpretation of the model. A new $Q$-metric, which is
appropriate for that purpose, may be introduced (see Sub-Section
\ref{ss3.9}).
Following Mostafazadeh \cite{mostafa}, one can define a subspace of
the Hilbert space, and the restriction of the Hamiltonian operator
to that  subspace, so that it has the same spectrum and eigenfunctions
as the original one. The referred subspace remains invariant under
the action of $H$.
Remarkably, stating that this Hermitian operator represents in a
reasonable sense the non-Hermitian operator may be controversial,
since relevant information on the Hamiltonian may not be captured in the mentioned subspace.

Non-Hermitian operators have typically non-trivial pseudospectra.
It is known that the relation (\ref{**}) holds via a bounded and boundedly invertible
positive transformation if and only if (\ref{***}) holds with a positive bounded and
boundedly invertible metric \cite{krejcirik}. Further, if
(\ref{***}) holds with a positive bounded and boundedly invertible
metric, then the pseudospectrum of $H$ is trivial.
The concept of pseudospectrum is of great relevance for the description
of non-Hermitian operators in the context of quantum mechanics.
A non trivial pseudospectrum ensures the non existence of a bounded metric.
\section*{Acknowledgments} This work was partially
supported by the Centro de Matem\'atica da Universidade de Coimbra
(CMUC), funded by the European Regional Development Fund through the
program COMPETE and by the Portuguese
Government through the FCT - 
{Funda\c c\~ao} para a Ci\^encia e a Tecnologia under the project
PEst-C/MAT/UI0324/2011.

\end{document}